\newcolumntype{+}{!{\vrule width 2pt}}
\newlength\savedwidth
\renewcommand{\@biblabel}[1]{\quad#1.}
\newtheorem{theorem}{Theorem}
\newtheorem{lemma}{Lemma}[theorem]
\newtheorem*{definition*}{Definition}
\newtheorem*{assumptions*}{Assumptions}
\newtheorem*{swe}{Shallow Water Equations}
\newcommand{\mat}[1]{\pmb{\mathsf{#1}}}
\renewcommand{\v}[1]{\mathbf{#1}}
\newcommand{\E}{\text{E}}
\newcommand{\Cov}{\text{Cov}}
\newcommand{\tr}{\text{tr}}
\newcommand{\algorithmiclcomment}{}
\newcommand{\LineComment}{\item[\algorithmiclcomment]}
\begin{document}
\vspace*{0.2in}

\begin{flushleft}
{\Large
\textbf\newline{Penalized Ensemble Kalman Filters for High Dimensional Non-linear Systems} 
}
\newline
\\
Elizabeth Hou\textsuperscript{1*},
Earl Lawrence\textsuperscript{2},
Alfred O. Hero\textsuperscript{1},
\\
\bigskip
\textbf{1} EECS Department, University of Michigan, Ann Arbor, Michigan, USA
\\
\textbf{2} Los Alamos National Laboratory, Los Alamos, New Mexico, USA
\\
\bigskip

*emhou@umich.edu

\end{flushleft}
\section*{Abstract}
The ensemble Kalman filter (EnKF) is a data assimilation technique that uses an ensemble of models, updated with data, to track the time evolution of a usually non-linear system. It does so by using an empirical approximation to the well-known Kalman filter. However, its performance can suffer when the ensemble size is smaller than the state space, as is often necessary for computationally burdensome models. This scenario means that the empirical estimate of the state covariance is not full rank and possibly quite noisy. To solve this problem in this high dimensional regime, we propose a computationally fast and easy to implement algorithm called the penalized ensemble Kalman filter (PEnKF). Under certain conditions, it can be theoretically proven that the PEnKF will be accurate (the estimation error will converge to zero) despite having fewer ensemble members than state dimensions. Further, as contrasted to localization methods, the proposed approach learns the covariance structure associated with the dynamical system. These theoretical results are supported with simulations of several non-linear and high dimensional systems.


\section{Introduction}

The Kalman filter is a well-known technique to track a linear system over time, and many variants based on the extended and ensemble Kalman filters have been proposed to deal with non-linear systems. The ensemble Kalman filter (EnKF) \cite{evensen1994sequential, burgers1998analysis} is particularly popular when the non-linear system is extremely complicated and its gradient is infeasible to calculate, which is often the case in geophysical systems. However, these systems are often high dimensional and forecasting each ensemble member forward through the system is computationally expensive. Thus, the filtering often operates in the high dimensional regime where the number of ensemble members, $n$, is much less than the size of the state, $p$. It is well known that even when $p / n \rightarrow const. $ and the samples are from a Gaussian distribution, the eigenvalues and the eigenvectors of the sample covariance matrix do not converge to their population equivalents, \cite{johnstone2001, johnstone2004}. Since our ensemble is both non-Gaussian and high dimensional ($n << p$), the sample covariance matrix of the forecast ensemble will be extremely noisy. In this paper, we propose a variant of the EnKF specifically designed to handle covariance estimation in this difficult regime, but with weaker assumptions and requiring less prior information on the state covariance structure than competing approaches.

\subsection*{Related Work}

To deal with the sampling errors, many schemes have been developed to de-noise the forecast sample covariance matrix. These schemes ``tune" the matrix with variance inflation and localization, \cite{ hamill2001distance, Houtekamer2001sequential, ott2004local, Houtekamer2005, wang2007, anderson2007, anderson2009, li2009, bishop2009a, bishop2009b, Houtekamer2009, campbell2010, Greybush2011, Miyoshi2011}. However, these schemes are often not trivial to implement because they require carefully choosing the inflation factor and using expert knowledge of the true system to set up the localization. Additionally, the EnKF with perturbed observations introduces additional sampling errors due to the perturbation noise's lack of orthogonality with the ensemble. Methods have been devised that construct perturbation matrices that are orthogonal, \cite{evensen2003ensemble}; however these methods are computationally expensive, \cite{evensen2004sampling}. This has led to the development of matrix factorization versions of the EnKF such as the square root and transform filters, \cite{bishop2001, Whitaker2002, Tippett2003, evensen2004sampling, hunt2007efficient, Godinez2012, Nerger2012, Todter2015}, which do not perturb the observations and are designed to avoid these additional sampling errors. Recent work \cite{van2020consistent} has also revisited the stochastic EnKF, which perturbs the modeled observations instead of the observations themselves arguing that in regimes with skewed likelihoods, this method is more accurate.

The ensemble Kalman filter is closely related to the particle filter \cite{Papadakis, van2019particle}, although it uses a Gaussian approximation of the conditional state distribution in order to get an update that is a closed form expression for the analysis ensemble (as opposed to one that requires numerical integration). While the particle filter does not use this approximation, it also requires an exponential number of particles to avoid filter collapse, \cite{snyder}. Recently, there has been significant effort to apply the particle filter to larger scale systems using equal weights, \cite{vanLeeuwen, Ades}, and merging it with the ensemble Kalman filter to form hybrid filters, \cite{Papadakis, lei2011, frei2013bridging, Nakano, Robert}. The EnKF is also related to the unscented Kalman filter, \cite{julier1997new, wan2000unscented}, which handles nonlinearity by propagating a carefully selected set of ``sigma points'' (as opposed to the randomly sampled points of the EnKF) through the nonlinear forecast equations. The results are then used to reconstruct the forecasted mean and covariance.

Most similar to our proposed work are \cite{ueno2009covariance} and \cite{nino2018ensemble}, which also propose methods that use sparse inverse covariance matrices. Both methods justify the appropriateness of using the inverse space with large scale simulations or real weather data. The former reports that their computational complexity is polynomial in the state dimension and requires the stronger assumptions of Gaussianity and structural knowledge. The latter algorithm can be implement in parallel making it very efficient, however, the paper \cite{nino2018ensemble} still makes the \textit{much} stronger assumptions of Gaussianity and known conditional independence structure among the states.

\subsection*{Proposed Method}

We propose a penalized ensemble Kalman filter (PEnKF), which uses an estimator of the forecast covariance whose inverse is sparsity regularized. While the localization approaches effectively dampen or zero out entries in the covariance, {\em our approach zeros out entries in the inverse covariance, resulting in a sparse inverse covariance}. This provides two advantages. First, it makes a weaker assumption about the relationship between state variables. Second, our approach does not require anything like localization's detailed knowledge of which covariance entries to fix at zero or how much to dampen. Instead, it merely favors sparsity in the inverse covariance. Additionally, our method is very easy to implement because it just requires using a different estimator for the covariance matrix in the EnKF. We can explicitly show the improvement of our estimator through theoretical guarantees. 

\subsubsection*{Outline}

In Section 2, we explain the assumptions in our high-dimensional system and we give background on the EnKF and $\ell_1$ penalized inverse covariance matrices. In Section 3, we give details on how to modify the EnKF to our proposed PEnKF and provide theoretical guarantees on the filter. Section 4 contains the simulation results of the classical Lorenz 96 system and a more complicated system based on modified shallow water equations. 

\section{Background} \label{background}

In this paper, we consider the scenario of a noisy, non-linear dynamics model $ f(\cdot) $, which evolves a vector of unobserved states $ \v{x}_t \in \mathbb{R}^p $ through time. We observe a noisy vector $ \v{y}_{t} \in \mathbb{R}^r $, which is a transformation of $ \v{x}_t $ by a function $ h(\cdot) $. Both the process noise $ \bm{\omega}_t$ and the observation noise $ \bm{\epsilon}_t $ are independent of the states $ \v{x}_t $. We assume both noises are zero mean Gaussian distributed with known diagonal covariance matrices, $\mat{Q}$ and $\mat{R}$. Often, it is assumed that the dynamics model does not have noise, making $ \bm{\omega}_t $ a zero vector, but for generality we allow $ \bm{\omega}_t $ to be a random vector.
\begin{flalign}
& \v{x}_{t} = f (\v{x}_{t-1}) + \bm{\omega}_t \tag*{Dynamics Model} \\
& \v{y}_{t} = h(\v{x}_{t}) + \bm{\epsilon}_{t} \tag*{Observation Model}
\end{flalign}

As with localization methods, we make an assumption about the correlation structure of the state vector in order to handle the high dimensionality of the state. In particular, we assume that only a small number $s\ll {p \choose 2}$ of pairs of state variables have non-zero conditional correlation, $ \Cov(x_i, x_j | x_{-(i,j)}) \neq 0 $ where $ x_{-(i,j)} $ represents all state variables except $ x_i $ and $x_j $. This means that, given all of the rest of the state, $x_i$ and $x_j$ are conditionally uncorrelated. They may have a dependency, meaning that the marginal correlation between them is non-zero, but such dependency is entirely explained by conditional dependence on other parts of the state. An example is given by a one-dimensional spatial field with three locations $x_1$, $x_2$, and $x_3$ where $x_1$ and $x_3$ are both connected to $x_2$, but not each other. In this case, it might be reasonable to model $x_1$ and $x_3$ as uncorrelated conditioned on $x_2$ although they are not necessarily marginally uncorrelated. Their marginal correlation might not be zero, but their conditional or partial correlation is zero. Rather than assuming that the zero pattern in the matrix of conditional correlations has a specific structure, we place no assumption on the zero pattern, only on the number of zeros (sparsity) of this matrix. In other words, we will allow the data to determine the specific pattern. 

The assumption that the set of non-zero conditional correlations is sparse is equivalent to the assumption that the inverse correlation matrix of the model state is sparse with few non-zero off-diagonal entries \cite{cond_indep}. We can also quantify the sparsity level as $d$, which is the maximum number of non-zero off-diagonals in any row. Thus the sparsity assumption can be stated as $ d^2 << p^2 $. Note that our assumption is on the conditional \textit{correlation} or lack of it, and we do not make any claims on independence between states. This is because $ \v{x}_t $ is not Gaussian when $ f(\cdot) $ is non-linear so lack of correlation does not imply any independence. Thus the zeros in the inverse covariance matrix do not imply conditional independence. This assumption is weaker than the one commonly made in localization. This common assumption is equivalent to assuming that the marginal covariance matrix itself is sparse whereas our assumption of sparse conditional covariance admits a dense covariance. Finally, because we do not assume that the conditionally correlated state variable interactions are the same for different time points, we allow the support set $\mathcal{E}_t$ (set of non-zero conditional correlations) and its size $s_t$ to change over time. 

While we assume $ f(\cdot) $ can be highly non-linear, which occurs in many geophysical systems, we assume the measurement system $ h(\cdot) $ is relatively linear, e.g, piece-wise linear, with gradients that are feasible to calculate. In the following subsections, we will use $\mat{H} $ to represent a linear operator equal to the Jacobian of $h(\cdot)$ when the measurement system is non-linear, under the assumption that $\mat{H} $ adequately approximates $h(\cdot)$. Additionally we assume that the additive noise $\boldmath \omega$ in the measurement system is Gaussian.

\subsection{Ensemble Kalman Filter}

The EnKF of \cite{evensen2003ensemble} is a well studied algorithm and there are numerous improved models \cite{bishop2001, Whitaker2002, Tippett2003, evensen2004sampling, hunt2007efficient, Godinez2012, Nerger2012, Todter2015} that build on its foundation. At time $ t = 0 $, $n$ samples are drawn from an initial distribution, which is often chosen as the standard multivariate normal when the true initial distribution is unknown, to form an initial ensemble $ \mat{A} \in \mathbb{R}^{p \times n} $. Subsequently, at every time point $t$, the observations $\v{y}_t$ are perturbed $n$ times with Gaussian white noise, $ \v{\eta}^{j} \sim N(\v{0}, \mat{R}) $, to form a perturbed observation matrix $\mat{D}_t \in \mathbb{R}^{p \times n}$, where $\v{d}^j_t = \v{y}_t + \v{\eta}^j$. 

The forecast covariance estimator $\hat{\mat{P}}^f$ is often defined as the sample covariance of the forecast ensemble, equal to $ \mat{S} = \frac{1}{n-1} (\mat{A}_0 - \bar{\mat{A}}_0) (\mat{A}_0 - \bar{\mat{A}}_0) ^T $, where $ \bar{\mat{A}}_0 $ is a $p \times n$ matrix whose columns are the sample mean vector $ \frac{1}{n} \sum_{j=1}^n \v{a}_0^{j} $ , but it can be another estimator such as a localized estimator (one that is localized with a taper matrix), or a penalized estimator as proposed in this paper.

\subsection{Bregman Divergence and the $\ell_1$ Penalty} \label{ell_1}

Below, we give a brief overview of the $\ell_1$-penalized log-determinant Bregman divergence and some properties of its minimizer, as described in \cite{ravikumar2011high}.
We denote $ \mat{S} $ to be any arbitrary sample covariance matrix, and $ \bm{\Sigma} = \E(\mat{S}) $ to be its true covariance matrix, where $ \E(\cdot) $ is the expectation function.

Bregman divergence $\text{B}(\Theta || \Omega)$ between two functions $\Theta$ and $\Omega$ is a measure of difference between the two functions. Here the functions to be compared are covariance matrices. Since we are interested in finding a sparse positive definite estimator for the inverse covariance matrix, it is natural to include a barrier function that forces positive definiteness of any minimizer $\bm{\Theta}$, e.g., $ -\log\det ( \cdot ) $, which has a domain restricted to positive definite matrices. Our optimal estimator $\bm{\Theta}$ for the inverse covariance matrix $\bm{\Sigma}^{-1}$ is defined as the following Bregman divergence minimizer 
\begin{flalign} \label{bregdiv}
& \underset{\bm{\Theta} \in \mathbb{S}^{p \times p}_{++}}{\arg\min} \, \text{B}( \bm{\Theta} || \Sigma^{-1}) = \underset{\bm{\Theta} \in \mathbb{S}^{p \times p}_{++}}{\arg\min} \, - \log \det (\bm{\Theta}) - \log \det (\bm{\Sigma}) + \tr \left( \bm{\Sigma} (\bm{\Theta} - \bm{\Sigma}^{-1} ) \right)
\end{flalign}
where $ \mathbb{S}^{p \times p}_{++} $ is the set of all symmetric positive definite $ p \times p $ matrices. When the covariance matrix $ \bm{\Sigma} $ is unknown the empirical equivalent, the sample covariance $\mat{S}$, will be used in its place. The Bregman minimizer \eqref{bregdiv} does not encourage sparsity of $\bm{\Theta}$ so a sparsity penalty will be introduced. This penalty also ensures strict convexity.

The empirical Bregman divergence minimizer with barrier function $ -\log\det ( \cdot ) $ and an $\ell_1$ penalty term reduces to 
\begin{flalign} \label{objective}
& \underset{\bm{\Theta} \in \mathbb{S}^{p \times p}_{++}}{\arg\min} \, \hat{\text{B}}^{\lambda}( \bm{\Theta} || \mat{S}^{-1}) = \underset{\bm{\Theta} \in \mathbb{S}^{p \times p}_{++}}{\arg\min} \, - \log \det (\bm{\Theta}) + \tr (\bm{\Theta} \mat{S}) + \lambda ||\Theta||_1 
\end{flalign}
where $\lambda \geq 0$ is a penalty parameter, and $|| \cdot ||_1$ denotes the element-wise $\ell_1$ norm. This can be generalized so that each entry of $\bm{\Theta}$ can be penalized differently if $\lambda$ is a matrix and using a element-wise product with the norm.

This objective has a unique solution, $ \bm{\Theta} = (\tilde{\mat{S}})^{-1} $, which satisfies 
\begin{flalign}
\frac{\partial}{\partial \bm{\Theta}} \hat{\text{B}}^{\lambda}( \bm{\Theta} || \mat{S}^{-1}) = \mat{S} -\bm{\Theta}^{-1} + \lambda \partial || \bm{\Theta}||_1 = 0
\end{flalign}
where $ \partial || \bm{\Theta}||_1 $ is a subdifferential of the $\ell_1$ norm defined in Eq. \eqref{subdiff} in the appendix. The solution $(\tilde{\mat{S}})^{-1}$ is a sparse positive definite estimator of the inverse covariance matrix $\bm{\Sigma}^{-1}$, and we can write its inverse explicitly as $ \tilde{\mat{S}} = \mat{S} + \lambda \tilde{\mat{Z}} $, where $ \tilde{\mat{Z}} $ is the unique subdifferential matrix that makes the gradient zero. See \cite{ravikumar2011high} or \cite{jankova2015confidence} for a more thorough explanation of $ \tilde{\mat{Z}} $.

Reference \cite{ravikumar2011high} shows that for well-conditioned covariances and a certain minimum sample sizes, the estimator $ (\tilde{\mat{S}})^{-1} $ has many nice properties including having, with high probability, the correct zero and signed non-zero entries and a sum of squared error that converges to 0 as $n, p, s \rightarrow \infty$. These properties will allow our method, described in the next section, to attain superior performance over the standard EnKF.

\section{$\ell_1$ Penalized Ensemble Kalman Filter}

Our penalized ensemble Kalman filter, whose pseudo code is shown in Algorithm \ref{penkf_alg}, modifies the EnKF by using a penalized forecast covariance estimator $ \tilde{\mat{P}}^f $. This penalized estimator is derived from its inverse, which is the minimizer of Eq. \eqref{objective}, and can be determined using standard numerical optimization methods \cite{friedman2008sparse, hsieh2013big, mazumder2012graphical}. Thus, from Section \ref{ell_1}, it can be explicitly written as $\tilde{\mat{P}}^f = \hat{\mat{P}}^f + \lambda \tilde{\mat{Z}} $, implying that we learn a matrix $ \tilde{\mat{Z}} $, and use it to modify our sample covariance $ \hat{\mat{P}}^f $ so that $(\hat{\mat{P}}^f + \lambda \tilde{\mat{Z}} )^{-1}$ is sparse. From this, our modified Kalman gain matrix is
\begin{flalign}
\tilde{\mat{K}} = ( \hat{\mat{P}}^f + \lambda \tilde{\mat{Z}} ) \mat{H}^T \left( \mat{H} ( \hat{\mat{P}}^f + \lambda \tilde{\mat{Z}} ) \mat{H}^T + \mat{R} \right)^{-1} = \left((\tilde{\mat{P}}^f )^{-1} + \mat{H}^T \mat{R}^{-1} \mat{H}\right)^{-1} \mat{H}^T \mat{R}^{-1} .
\end{flalign}

The intuition behind this forecast covariance estimator is that since only a small number of the state variables in the state vector $\v{x}_t$ are conditionally correlated with each other, {\em the forecast inverse covariance matrix $ (\mat{P}^f)^{-1} $ will be sparse with many zeros in the off-diagonal entries.} Furthermore, since minimizing Eq. \eqref{objective} gives a sparse estimator for $ (\mat{P}^f)^{-1} $, this sparse estimator will accurately capture the conditional correlations and uncorrelations of the state variables. Thus $ \tilde{\mat{P}}^f $ will be a much better estimator of the true forecast covariance matrix $ \mat{P}^f $ because the $ \ell_1 $ penalty will depress spurious noise in order to make $ (\tilde{\mat{P}}^f)^{-1} $ sparse, while the inverse of the sample forecast covariance $(\hat{\mat{P}}^f )^{-1} $, when it exists, will be non-sparse. In addition, because the estimator of the covariance $ \tilde{\mat{P}}^f $ is positive definite, standard matrix inversion techniques can be used to convert it to an estimator of the precision matrix. 

As in most penalized estimators, $ \tilde{\mat{P}}^f $ is a biased estimator of the forecast covariance, while the standard covariance is unbiased. But because the forecast distribution is corrected in the analysis step, it is acceptable to take this bias as a trade-off for less variance (sampling errors). A more in-depth study of the consequences of bias in $\ell_1$ penalized inverse covariance matrices and their inverses is described in \cite{jankova2015confidence}. Additionally, this bias due to penalization in the inverse covariance can be attributed as naturally occurring variance inflation where the bias on the diagonal of $ (\tilde{\mat{P}}^f)^{-1} $ is due to the inflation factor $\lambda$. Hence the bias in the forecast covariance estimator is not necessarily disadvantageous. Finally, since we do not assume the state variables interact in the same way over all time, we re-learn the matrix $ \tilde{\mat{Z}} $ every time the ensemble is evolved forward. 

We can choose the penalty parameter $ \lambda $ in a systematic fashion by calculating a regularization path, solving Eq. \eqref{objective} for a list of decreasing $\lambda$s, and evaluating each solution with an information criterion, such as an extended or generalized Akaike information criterion (AIC) or Bayesian information criterion (BIC) \cite{foygel2010extended, lv2014model}. Additionally, if we have knowledge or make additional assumptions about the moments of the ensemble's distribution, we can calculate an optimal penalty parameter $\lambda$ up to a constant factor (see proof of Theorem \ref{SSE_K}). Thus, we can refine the penalty parameter by calculating a regularization path for an optimal order penalty function. In Section \ref{sec:sims}, we describe a practical approach to choosing $\lambda$ using a free forecast model run like in \cite{Robert} and using the BIC.

\begin{algorithm}[H]
	\caption{Penalized Ensemble Kalman Filter}
	\label{penkf_alg}
	\begin{algorithmic}
		\Require initial ensemble $ \mat{A}=[\v{a}^{1}, \ldots, \v{a}^{n}]$, measurement operator $\mat{H}$, process and observation noise covariance matrices $\mat{Q}$ and $\mat{R}$, perturbed observation matrix $\mat{D}_t$ 
		\For{ $ t \in \{ 1, ..., T\} $ }
	    \LineComment 1) Evolve each ensemble member $\v{a}^{j}$ forward in time
		\State $ \v{a}^{j}_0 = f(\v{a}^{j}) + \v{w}^{j} \quad \forall j \in \{1, ... , n \} $ where $ \v{w}^{j} \sim N(\v{0}, \mat{Q}) $
	    \LineComment 2) Calculate the sample covariance for \eqref{objective} 
		\State $ \mat{S} = \frac{1}{n-1} (\mat{A}_0 - \bar{\mat{A}}_0) (\mat{A}_0 - \bar{\mat{A}}_0) ^T $ where $ \bar{\mat{A}}_0 = (\frac{1}{n} \sum_{j=1}^n \v{a}_0^{j}) \v{1}^T $
		\LineComment 3) Estimate the modified Kalman gain matrix 
		\State $ \tilde{\mat{K}} = \tilde{\mat{P}}^f \mat{H}^T ( \mat{H} \tilde{\mat{P}}^f \mat{H}^T + \mat{R} )^{-1} $ where $\tilde{\mat{P}}^f = \bm{\Theta}^{-1}$ and $\bm{\Theta}$ is the solution to \eqref{objective} 
		\LineComment 4) Update the ensemble with the observations
		\State $ \mat{A} = \mat{A}_0 + \tilde{\mat{K}} ( \mat{D}_t - \mat{H A}_0 ) $ 
		\LineComment 5) Predict using the analysis ensemble mean
		\State $ \tilde{\v{x}}_t = \frac{1}{n} \sum_{j=1}^n \v{a}^{j} $
		\EndFor 
		\Ensure $ \tilde{\v{x}}_t, \mat{A}_t$
	\end{algorithmic}
\end{algorithm}		

\subsection{Implications on the Kalman Gain Matrix}

The observations $\mat{D}_t$ come into the EnKF through the ensemble update (step 4 of Algorithm \ref{penkf_alg}), which depends linearly on the Kalman gain matrix $\mat{K}$. So, having an accurate estimator of the true Kalman gain matrix $\mat{K}$ will ensure that data is properly merged into the ensemble. Since the true Kalman gain matrix inherits many of the properties of the forecast covariance matrix $\mat{P}^f $, in view of the relation in step 3 of Algorithm \ref{penkf_alg}, the accuracy of our modified Kalman gain matrix $ \tilde{\mat{K}} $ will depend strongly on the accuracy of the forecast covariance estimator $ \tilde{\mat{P}}^f $.

The quality of the estimated forecast covariance matrix $ \tilde{\mat{P}}^f $ will depend on the structure of the true unknown forecast matrix $\mat{P}^f $. If $\mat{P}^f $ is close to singular or contains many entries with magnitudes smaller than the root mean square noise level, it will be difficult to accurately estimate. In the following theorem, we assume that the forecast covariance matrix is well-behaved in the sense that it satisfies standard regularity conditions (incoherence, bounded eigenvalue, sparsity, sign consistency and monotonicity of the tail function) found in \cite{ravikumar2011high, jankova2015confidence}, which are defined in the appendix. Additionally, the rate at which the estimator converges depends on the distribution of the ensemble members. For example, if the ensemble members follow a light tailed distribution, more specifically a sub-Gaussian distribution \cite{vershynin2018high} , the estimator will typically have a faster rate of convergence than if the ensemble follows a heavy tailed distribution.

\begin{theorem}\label{SSE_K}
	Let $\hat{\mat{K}}$ be the Kalman gain of the standard EnKF and let $\tilde{\mat{K}}$ be the modified Kalman gain in the proposed penalized EnKF in Algorithm \ref{penkf_alg}. Under regularity conditions and for the system described in Section \ref{background}, when $\lambda \asymp \sqrt{3 \log(p) / n}$ for sub-Gaussian ensembles and $ \lambda \asymp \sqrt{p^{3/m}/n} $ for ensembles with bounded $ 4m^\mathrm{th} $ moments,
	$$ \text{Sum of Squared Errors of } \tilde{\mat{K}} \lesssim \text{ Sum of Squared Errors of } \hat{\mat{K}} $$
	and as long as the sample size is at least $ o(n) = 3 d^2 \log(p) $ for sub-Gaussian ensembles and $o(n) = d^2 p^{3/m} $ for ensembles with bounded $ 4m^\mathrm{th} $ moments,
	$$ \text{Sum of Squared Errors of } \tilde{\mat{K}} \rightarrow 0 \text{ with high probability as } n, p, s \rightarrow \infty .$$
\end{theorem}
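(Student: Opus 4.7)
The plan is to reduce the sum-of-squared-errors of the modified Kalman gain to the Frobenius error of the penalized forecast covariance, and then apply the concentration bounds of Ravikumar et al.\ summarized in Section \ref{ell_1}. The key identity is the matrix-inverse perturbation rule $\mat{A}^{-1} - \mat{B}^{-1} = -\mat{A}^{-1}(\mat{A}-\mat{B})\mat{B}^{-1}$; applied to $\mat{K}(\mat{P}) = \mat{P}\mat{H}^T(\mat{H}\mat{P}\mat{H}^T + \mat{R})^{-1}$, it lets me write $\tilde{\mat{K}} - \mat{K}$ as a sum of products involving $\tilde{\mat{P}}^f - \mat{P}^f$, $\mat{H}$, and bounded inverse factors. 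Under the regularity assumption that $\mat{P}^f$ has bounded spectrum and because $\mat{R}$ is positive definite, the operator norms $\|(\mat{H}\mat{P}^f\mat{H}^T + \mat{R})^{-1}\|_2$ and $\|(\mat{H}\tilde{\mat{P}}^f\mat{H}^T + \mat{R})^{-1}\|_2$ are uniformly bounded, so sub-multiplicativity yields $\|\tilde{\mat{K}} - \mat{K}\|_F^2 \lesssim \|\tilde{\mat{P}}^f - \mat{P}^f\|_F^2$.

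Next I would invoke the Ravikumar et al.\ concentration bounds, which control $\|(\tilde{\mat{P}}^f)^{-1} - (\mat{P}^f)^{-1}\|_F^2$ by a quantity of order $(s+p)\lambda^2$ with high probability, provided $\lambda$ is of the stated order and the sample size is at least $d^2\log p$ (sub-Gaussian) or $d^2 p^{3/m}$ (bounded $4m^{\text{th}}$ moments). Because both estimators have bounded operator norm, the identity $\tilde{\mat{P}}^f - \mat{P}^f = \mat{P}^f\bigl[(\tilde{\mat{P}}^f)^{-1} - (\mat{P}^f)^{-1}\bigr]\tilde{\mat{P}}^f$ transfers the same rate from the inverse scale to the covariance scale. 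The $\lambda$ prescribed by the theorem is precisely the one that balances the bias of the $\ell_1$ penalty against the variance of $\hat{\mat{P}}^f$, so the resulting Frobenius error vanishes in the asymptotic regime $n, p, s \to \infty$, which gives the second claim.

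For the comparison claim, the same perturbation identity applied to $\hat{\mat{K}} - \mat{K}$ reduces it to $\|\hat{\mat{P}}^f - \mat{P}^f\|_F$. In the $n \ll p$ regime this Frobenius error is of order at least $\sqrt{p^2/n}$ (and $\hat{\mat{P}}^f$ is not even invertible), while the penalized estimator's error is of order $\sqrt{(s+p)\log p/n}$ or $\sqrt{(s+p)p^{3/m}/n}$, both strictly smaller under the sparsity assumption $s \ll p^2$. The $\lesssim$ conclusion then follows by combining the two reductions.

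The main obstacle is extending the Ravikumar et al.\ concentration arguments to the non-Gaussian forecast ensemble that arises from the nonlinear $f$. The sub-Gaussian case goes through standard Bernstein-type inequalities, but the bounded $4m^{\text{th}}$ moment case requires a truncation argument to handle heavier tails, which is exactly what determines the scaling $\lambda \asymp \sqrt{p^{3/m}/n}$ and the heavier sample-size requirement. A subtler issue is verifying that the regularity conditions (incoherence, bounded eigenvalues, sign consistency, tail monotonicity) for $\mat{P}^f$ actually survive propagation through the nonlinear dynamics from one time step to the next; as stated this is folded into the regularity preamble, but a fully rigorous proof would need to comment on which operations on $f$ preserve these assumptions.
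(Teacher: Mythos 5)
Your overall strategy---a matrix-inverse perturbation identity, the Ravikumar et al.\ rates, and a comparison of the two resulting rates---is the same as the paper's, but you control the penalized gain at a different scale, and that is where the two arguments genuinely diverge. The paper never converts the inverse-scale error into a covariance-scale error: for $\tilde{\mat{K}}$ it switches to the information form $\mat{K} = ((\mat{P}^f)^{-1} + \mat{H}^T\mat{R}^{-1}\mat{H})^{-1}\mat{H}^T\mat{R}^{-1}$ (item (D4) of the appendix) and applies the resolvent identity there, so the factor $\lVert(\tilde{\mat{P}}^f)^{-1}-(\mat{P}^f)^{-1}\rVert_F^2$---exactly the quantity the GLASSO theory bounds---appears directly, multiplied by terms controlled via $\lVert\tilde{\mat{K}}\rVert_2\le 1$. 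You instead reduce $\lVert\tilde{\mat{K}}-\mat{K}\rVert_F$ to $\lVert\tilde{\mat{P}}^f-\mat{P}^f\rVert_F$ and then transfer the inverse-scale rate across with $\tilde{\mat{P}}^f-\mat{P}^f = -\mat{P}^f\bigl[(\tilde{\mat{P}}^f)^{-1}-(\mat{P}^f)^{-1}\bigr]\tilde{\mat{P}}^f$; that transfer requires $\lVert\tilde{\mat{P}}^f\rVert_2$, equivalently $1/\sigma_{\min}((\tilde{\mat{P}}^f)^{-1})$, to be bounded, which holds with high probability only after the operator-norm error on the inverse is already small (i.e.\ under the sample-size condition), so it is an extra high-probability event you must state and track, and it is not available for the first, rate-comparison claim in the regime where that condition is not assumed. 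The paper's use of the information form sidesteps this entirely, which is what each approach buys or costs; your treatment of $\hat{\mat{K}}$ coincides with the paper's, with the shared caveat that both arguments compare upper bounds on the two errors rather than establishing a lower bound on the sample-covariance side.
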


The above theorem gives us a sense of the performance of the modified Kalman gain matrix in comparison to the sample Kalman gain matrix. It shows that with high probability, the modified Kalman gain matrix will have an asymptotically smaller ($\lesssim$) sum of squared error (SSE) than a Kalman gain matrix formed using the sample forecast covariance matrix. Also, for a given state dimensionality $p$, the theorem provides the minimum ensemble size $n$ required for our modified Kalman gain matrix to be a good estimate of the true Kalman gain matrix. The sub-Gaussian criterion, where all moments are bounded, is actually very broad and includes any state vectors with a strictly log-concave density and any finite mixture of sub-Gaussian distributions. However even if not all moments are bounded, the larger the number of bounded fourth-order moments $m$, the smaller the necessary sample size. In comparison, the sample Kalman gain matrix requires $ o(n) = p^2 $ samples in the sub-Gaussian case, and also significantly more in the other case (see appendix for exact details). When the minimum sample size for a estimator is not met, good performance cannot be guaranteed because the asymptotic error will diverge to infinity instead of converge to zero. This is why when the number of ensemble members $n$ is smaller than the number of states squared $p^2$, just using the sample forecast covariance matrix is not sufficient. Additionally the theorem also gives us a way to search for the optimal penalty parameter. It tells us how the penalty parameter $\lambda$ should be chosen to scale asymptotically $\asymp$ as a function of the dimensions of the state $p$, the ensemble size $n$, when the ensemble distribution follows either a sub-Gaussian or has $m$ bounded moments.

\subsection{Implications on the Analysis Ensemble}

It is well known that, due to the additional stochastic noise used to perturb the observations, the covariance of the EnKF's analysis ensemble, $ \widehat{\Cov}(\mat{A}) $ is not equivalent to its analysis covariance calculated by the Gaussian update $ \hat{\mat{P}}^a = (\mat{I} - \hat{\mat{K}} \mat{H}) \hat{\mat{P}}^f $. This has led to the development of deterministic variants such as the square root and transform filters, which do have $ \widehat{\Cov}(\mat{A}) = \hat{\mat{P}}^a $. However, in a non-linear system, this update is sub-optimal because it uses a Gaussian approximation of $ \text{p}(\v{x}_t | \v{y}_{t-1}) $, the actual conditional distribution of forecast ensemble $ \mat{A}_0 $ after $t$ iterations of Algorithm \ref{penkf_alg}. Denote $ \mat{P}^a $ as the true analysis covariance defined in terms of the posterior state distribution $ \text{p}(\v{x}_t | \v{y}_{t}) $ as
\begin{flalign}
\int (\v{x}_t)^2 \text{p}(\v{x}_t | \v{y}_t) d\v{x}_t - \left( \int \v{x}_t \text{p}(\v{x}_t | \v{y}_t) \, d\v{x}_t \right)^2 
\end{flalign}
where $ \text{p}(\v{x}_t | \v{y}_t) = \text{p}(\v{y}_t | \v{x}_t) \text{p}(\v{x}_t | \v{y}_{t-1}) / \text{p}(\v{y}_t) $ may not be Gaussian. Thus when $ \text{p}(\v{x}_t | \v{y}_{t}) $ is not Gaussian, $\E(\hat{\mat{P}}^a ) \neq \mat{P}^a $ and there will always be an analysis error regardless of whether $ \widehat{\Cov}(\mat{A}) = \hat{\mat{P}}^a $ or not.

In fact, as mentioned in \cite{lei2011}, none of the analysis moments of the EnKF are consistent with the true moments, including the analysis mean. This analysis error is present in all methods that do not use unbiased estimates of the posterior distribution, e.g., particle filters. Thus the proposed penalized EnKF will suffer from the same types of biases as other EnKF approximations. However, as established in Theorem \ref{SSE_K} and in our experimental results, the effect of these biases on the Kalman gain and on forecasting performance is no worse than other EnKFs. 

\subsection{Computational Time and Storage Issues}

The computational complexity of solving for the minimizer of Eq. \eqref{objective} with the GLASSO algorithm from \cite{friedman2008sparse} is $O(s p^2)$ because it is a coordinate descent algorithm. Although the final estimator $(\tilde{\mat{P}^f})^{-1}$ is sparse and only requires storing $s+p$ values, the algorithm requires storing $ p \times p $ matrices in memory. However, by using iterative quadratic approximations to Eq. \eqref{objective}, block coordinate descent, and parallelization, the BIGQUIC algorithm of \cite{hsieh2013big} has computational complexity $O(s (p/k))$ and only requires storing $ (p/k) \times (p/k) $ matrices, where $k$ is the number of parallel subproblems or blocks.

The matrix operations for the analysis update $ \mat{A} = \mat{A}_0 + \tilde{\mat{K}} ( \mat{D}_t - \mat{H A}_0 ) $ can also be linear in $p$ if $\mat{R}$ is diagonal and $\mat{H}$ is sparse (like in banded interpolation matrices) with at most $h << r$ non-zero entries in a row. Then $( (\tilde{\mat{P}^f})^{-1} + \mat{H}^T \mat{R}^{-1} \mat{H})$ has at most $(s + p + rh^2) << p^2$ non-zero entries and can be computed with $ O(s+p+rh^2) $ matrix operations. Furthermore $\upsilon = \tilde{\mat{K}} ( \mat{D}_t - \mat{H A}_0 ) $ only takes $O(n(s+p+rh^2))$ matrix operations because it is composed of the solutions to the sparse linear systems $ ((\tilde{\mat{P}^f})^{-1} + \mat{H}^T \mat{R}^{-1} \mat{H}) \upsilon = \mat{H}^T \mat{R}^{-1} (\mat{D}_t - \mat{H} \mat{A}_0)$, where the right-hand side takes $ O(p r^2 + r p n) $ matrix operations to form.

We also point out that EnKF methods do not need to calculate the $p \times p$ sample covariance matrix $\hat{\mat{P}}^f$ explicitly, which may be difficult in high-dimensional systems. Instead it only needs to compute on the ensemble matrix, which is only of size $p \times n$, when $n\ll p$. This makes these methods computationally feasible for large $p$; however, they are implicitly using a rank deficient estimator, whose rank is $n$. In order to accurately estimate all dimensions of the system, a full rank estimator is needed of the sample covariance, but this comes with the higher computational cost of order $p^2$ instead of order $p n$. Similarly, EnKF methods also have low memory requirements because they only need to store the $n\times p$ ensemble matrix. In contrast to previous EnKFs, the proposed PEnKF predictor must store the $p \times p$ matrix solution $\bm{\Theta}$ to \eqref{objective}. However, for sufficiently large sparsity penalty $\lambda$ the solution $\bm{\Theta}$ and its inverse $\tilde{\mat{P}}^f$, will be sparse reducing considerably the memory requirements. The PEnKF can also benefit from dividing the storage up into $k$ blocks of size $ (p/k) \times (p/k) $. In any case, memory storage limitations of PEnKF can be rectified by using distributed storage, e.g., cloud virtual disk memory, at the cost of increased overall run time.

\section{Simulations} \label{sec:sims}

In all simulations, we compare the proposed PEnKF to an ensemble Kalman filter where the forecast covariance matrix is localized with a taper matrix generated from equation (4.10) in \cite{gaspari1999}. The taper matrix parameter for localization $c$ is chosen using the true interactions of the system, so the localization should be close to optimal for simple systems. We use this B-Loc EnKF as the baseline because if the PEnKF can do as well as this filter, it implies that the PEnKF can learn a close to optimal covariance matrix, even without the need to impose known sparsity structure. This would imply that the PEnKF can learn some structure that is not captured by the commonly implemented EnKF method that incorporates localization with a taper matrix. 

\subsubsection{Choosing the Penalty Parameter}
In order to choose the penalty parameter for the PEnKF, we model the state variables in our examples as sub-Gaussian. In this case, we can set $\lambda = c_{\lambda} \sqrt{ R \log(p) / n }$ for some appropriate choice of $c_{\lambda}$ (see the proof of Theorem \ref{SSE_K}, where $R$ is the observation noise's variance. To estimate $c_{\lambda}$, we generate a representative ensemble (which may also be our initial ensemble) using a free forecast run like in \cite{Robert} in which a state vector is drawn at random (e.g. from $N(\v{0}, \mat{I}) $) and evolved forward. The representative ensemble is produced by taking a set of equally spaced points (e.g. every 100th state vector) from the evolution. This ensemble is used to choose $c_{\lambda}$ from some predefined interval by minimizing the extended Bayesian information criterion (eBIC) of \cite{foygel2010extended} if $p > n$ or the BIC of \cite{schwarz1978} if $p < n$. If we believe that the penalty parameter should not be constant for all states, e.g. we have multiple types of states in the second simulation scenario, we can search for multiple $c_{\lambda}$ in a similar fashion.

Of course because Eq. \eqref{objective} has the form of a Gaussian likelihood function, it will only be the correct likelihood if the states are actually Gaussian. As the state transition function $f(x)$ is non-linear, the states will not generally be Gaussian distributed, and hence, the selection of an optimal penalty $\lambda$ using an information criterion like BIC can introduce bias. One way to reduce this bias is to correct our information criterion for the misspecification as in \cite{lv2014model}. However, such a correction can be quite difficult. We leave an in-depth exploration of this bias correction problem for future work. Here, we assume the misspecified information criterion is close to the correct information criterion. Our experimental results below show that, despite the bias, it performs well.

\subsubsection{Metrics}
We define the root mean squared error (RMSE) for performance evaluation
\begin{flalign}
\text{RMSE}_t = \sqrt{ (|| \hat{\v{x}}_t - \v{x}_t ||_2)^2 / p},
\end{flalign}
where $\text{RMSE}_t$ is an element of a vector RMSE at any time point $t$, $ \v{x}_t $ is a vector of the true hidden state variables, $ \hat{\v{x}}_t $ is a filter's predictor for the true state vector, and $ || \cdot ||_2 $ is the $ \ell_2 $ norm. We will refer to statistics such as the mean or median RMSE to be the mean or median of the elements of the RMSE vector.

\subsection{Lorenz 96 System}

The 40-state Lorenz 96 model is one of the most common systems used to evaluate ensemble Kalman filters. The state variables are governed by the following differential equations
\begin{flalign}
\frac{dx^{i}_{t}}{dt} = \left( x^{i+1}_t - x^{i-2}_t \right) x^{i-1}_t - x^i_t + 8 \qquad \forall i=1, \dots, 40
\end{flalign}
where $ x^{41}_t = x^{1}_t, x^{0}_t = x^{40}_t, \text{ and } x^{-1}_t = x^{39}_t$.

We use the following simulation settings. We have observations for the odd state variables, so $ \v{y}_t = \mat{H} \v{x}_t + \bm{\epsilon}_t$ where $\mat{H}$ is a $ 20 \times 40$ matrix with ones at entries $ \{ i, j=2i-1 \} $ and zeros everywhere else and $ \bm{\epsilon}_t $ is a $ 20 \times 1 $ vector drawn from a $N(\v{0}, 0.5 \, \mat{I}) $. We initialize the true state vector from a $N(\v{0}, \mat{I}) $ and we assimilate at every $ 0.4 t $ time steps, where $ t = 1, \dots, 2000 $. The system is numerically integrated with a 4th order Runge-Kutta method and a step size of 0.01. The main difficulties of this system are the large assimilation time step of $0.4$, which makes it significantly non-linear, and the lack of observations for the even state variables.

Since the exact equations of the Lorenz 96 model are fairly simple, it is clear how the state variables interact with each other. This makes it possible to localize with a taper matrix that is almost optimal by using the Lorenz 96 equations to choose a half-length parameter $c$. However, we do not incorporate this information in the PEnKF algorithm, which instead learns interactions by extracting it from the sample covariance matrix. We set the penalty parameter $ \lambda = c_\lambda \sqrt{0.5 \log(p) / n}$ by using an offline free forecast run to search for the constant $ c_\lambda $ in the range $ [0.1, 10] $ as described at the beginning of this section.

We average the PEnKF estimator of the forecast inverse covariance matrix at the time points 500, 1000, 1500, and 2000 for 50 trials with 25 ensembles members, and we compare it to the ``true" inverse covariance matrix, which is calculated by moving an ensemble of size 2000 through time. In Fig. \ref{iCovEntries}, each line represents the averaged normalized rows of an inverse covariance matrix and the lines are centered at the diagonal. The penalized inverse covariance matrix does a qualitatively good job of capturing the neighborhood information and successfully identifies that any state variables far away from state variable $i$, do not interact with it.

\begin{figure}[!h]
	\centering
	\noindent\includegraphics[width=3.25in, height=3in]{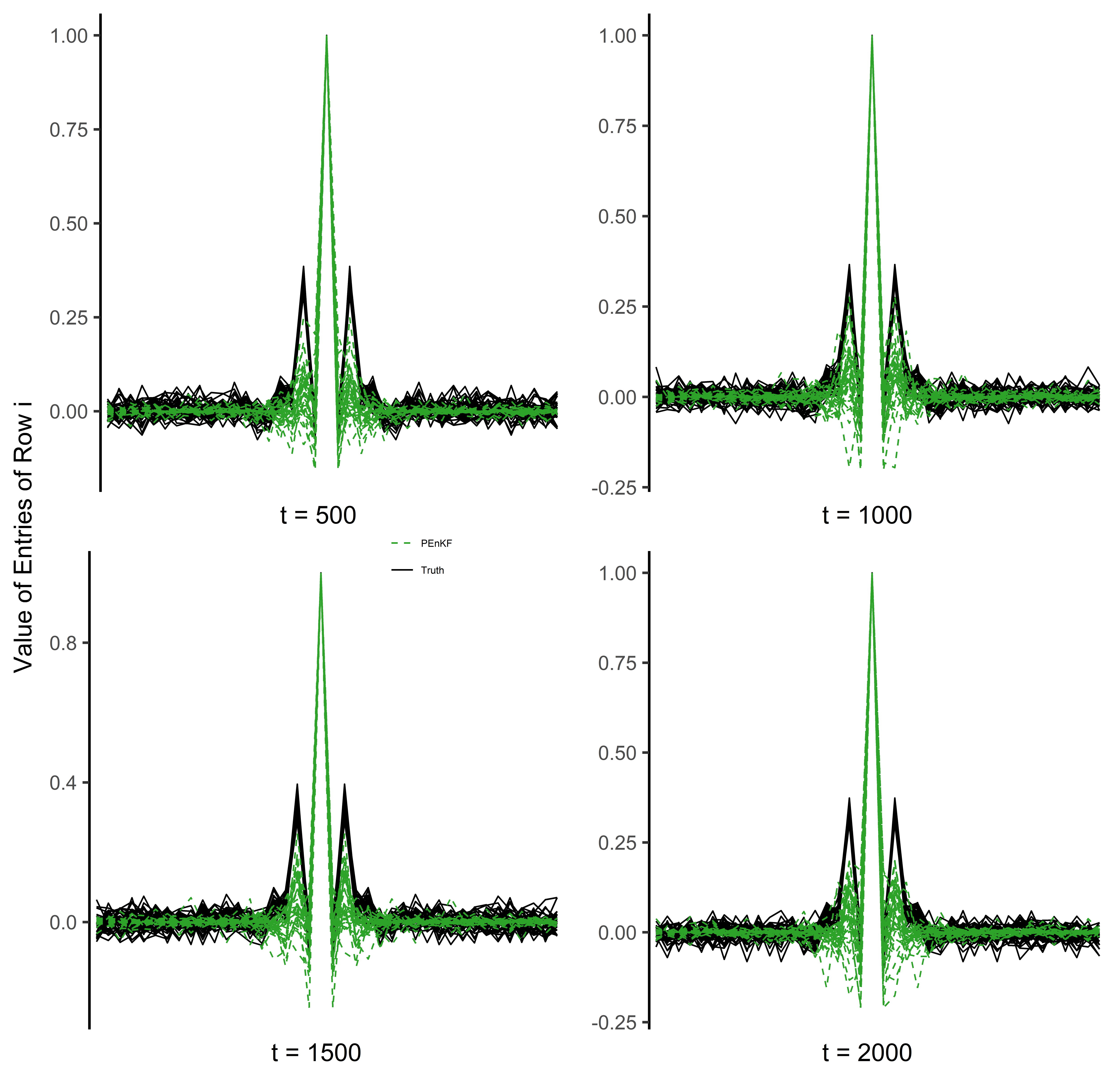}\\
	\caption{Each line represents the normalized values of a row of the inverse covariance matrix averaged over 50 trials. The x-axis shows the entries of a row $i$ ordered from $i-20$ to $i+20$, so that the middle of the x-axis is the ``diagonal" of the row and the values further from the center of the x-axis are further from the ``diagonal". The PEnKF algorithm is successful at identifying that the state variables far away from variable $i$ have no effect on it, even though there are fewer ensemble members than state variables. }\label{iCovEntries}
\end{figure}

Because the PEnKF is successful at estimating the structure of the inverse covariance matrix and thus the forecast covariance matrix, we expect it will have good performance for estimating the true state variables. We compare the PEnKF to the B-Loc EnKF and other estimators from \cite{bengtsson2003, lei2011, frei2013bridging, frei2013mixture} by looking at statistics of the RMSE. Note that in order to have comparable statistics to as many other papers as possible, we do not add variance inflation to the B-Loc EnKF (like in \cite{bengtsson2003, frei2013bridging, frei2013mixture} and unlike in \cite{lei2011} ). Also, like in those papers, we initialize the ensemble from a $N(\v{0}, \mat{I}) $, and we use this ensemble to start the filters. Note that in this case, the initial ensemble is different than the offline ensemble that we use to estimate the PEnKF's penalty parameter. This is because the initial ensemble is not representative of the system and its sample covariance is an estimator for the identity matrix. The B-Loc EnKF, which is simply called the EnKF in the other papers, is localize by applying a taper matrix where $c = 10$ to the sample covariance matrix.

We show the mean, median, 10\%, and 90\% quantiles of the RMSE averaged over 50 independent trials for ensembles of size 400, 100, 25, and 10 in Table \ref{RMSE_lorenz}. For 400 ensemble members, the PEnKF does considerably better than the B-Loc EnKF and its relative improvement is larger than that of the XEnKF reported in \cite{bengtsson2003} and similar to those of the NLEAF, EnKPF, and XEnKF reported in \cite{lei2011, frei2013bridging, frei2013mixture} respectively. For 100 ensemble members, the PEnKF does do worse than the B-Loc EnKF and EnKPF of \cite{frei2013bridging}; this we suspect may be do to the bias-variance trade-off when estimating the forecast covariance matrix. The PEnKF has the most significant improvement over the B-Loc EnKF in the most realistic regime where we have fewer ensemble members than state variables. For both 25 and 10 ensemble members, the PEnKF does considerably better than the B-Loc EnKF and it does not suffer from filter divergence, which \cite{frei2013bridging} report occurs for the EnKPF at 50 ensemble members.

\begin{table}[h] 
	\caption{Mean, median, 10\% and 90\% quantile of RMSE averaged over 50 trials. The number in the parentheses is the summary statistics' corresponding standard deviation. }
	\begin{center}
		\begin{tabular}{ccccrrcrc}
			\hline\hline
			$n = 400$ & 10\% & 50 \% & Mean & 90\% \\
			\hline
			B-Loc EnKF & 0.580 (.01) & 0.815 (.01) & 0.878 (.02) & 1.240 (.03) \\
			PEnKF & 0.538 (.02) & 0.757 (.03) & 0.827 (.03) & 1.180 (.05) \\
			\hline
		\end{tabular}
		
		\medskip
		\noindent
		
		\begin{tabular}{ccccrrcrc}
			\hline\hline
			$n = 100$ & 10\% & 50 \% & Mean & 90\% \\
			\hline
			B-Loc EnKF & 0.582 (.01) & 0.839 (.02) & 0.937 (.03) & 1.390 (.06) \\
			PEnKF & 0.717 (.04) & 0.988 (.04) & 1.067 (.04) & 1.508 (.05) \\
			\hline
		\end{tabular}
		
		\medskip
		\noindent
		
		\begin{tabular}{ccccrrcrc}
			\hline\hline
			$ \ n = 25 \ $ & 10\% & 50 \% & Mean & 90\% \\
			\hline
			B-Loc EnKF & 0.769 (.04) & 1.668 (.13) & 1.882 (.09) & 3.315 (.11) \\
			PEnKF & 0.971 (.03) & 1.361 (.03) & 1.442 (.03) & 2.026 (.04) \\
			\hline
		\end{tabular}
		
		\medskip
		\noindent
		
		\begin{tabular}{ccccrrcrc}
			\hline\hline
			$ \ n = 10 \ $ & 10\% & 50 \% & Mean & 90\% \\
			\hline
			B-Loc EnKF & 2.659 (.07) & 3.909 (.06) & 3.961 (.05) & 5.312 (.06) \\
			PEnKF & 1.147 (.02) & 1.656 (.02) & 1.735 (.02) & 2.437 (.04) \\
			\hline
		\end{tabular}
		
	\end{center}
	\label{RMSE_lorenz}
\end{table}

While it is clear the PEnKF does well even when there are fewer ensemble members than state variables, 40 variables is not enough for the problem to be considered truly high-dimensional. We now consider simulation settings where we increase the dimension of the state space $p$ while holding the number of ensemble members $n$ constant. We initialize the ensemble from the free forecast run and set $ \lambda $ and the taper matrix in the same way as in the previous simulations. We examine the mean RMSE averaged over 50 trials and its approximate 95\% confidence intervals in the Fig. \ref{Lorenz_RMSE}. The mean RMSE of the PEnKF is significantly smaller than the mean RMSE of the B-Loc EnKF for all $p$. Additionally the confidence intervals of the mean RMSE are much narrower than the ones for the B-Loc EnKF . This suggest that there is little variability in the PEnKF's performance, while the B-Loc EnKF 's performance is more dependent on the trial, with some trials being ``easier" for the B-Loc EnKF than others.

\begin{figure}[!h]
	\centering
		\noindent\includegraphics[width=3.25in, height=2.5in]{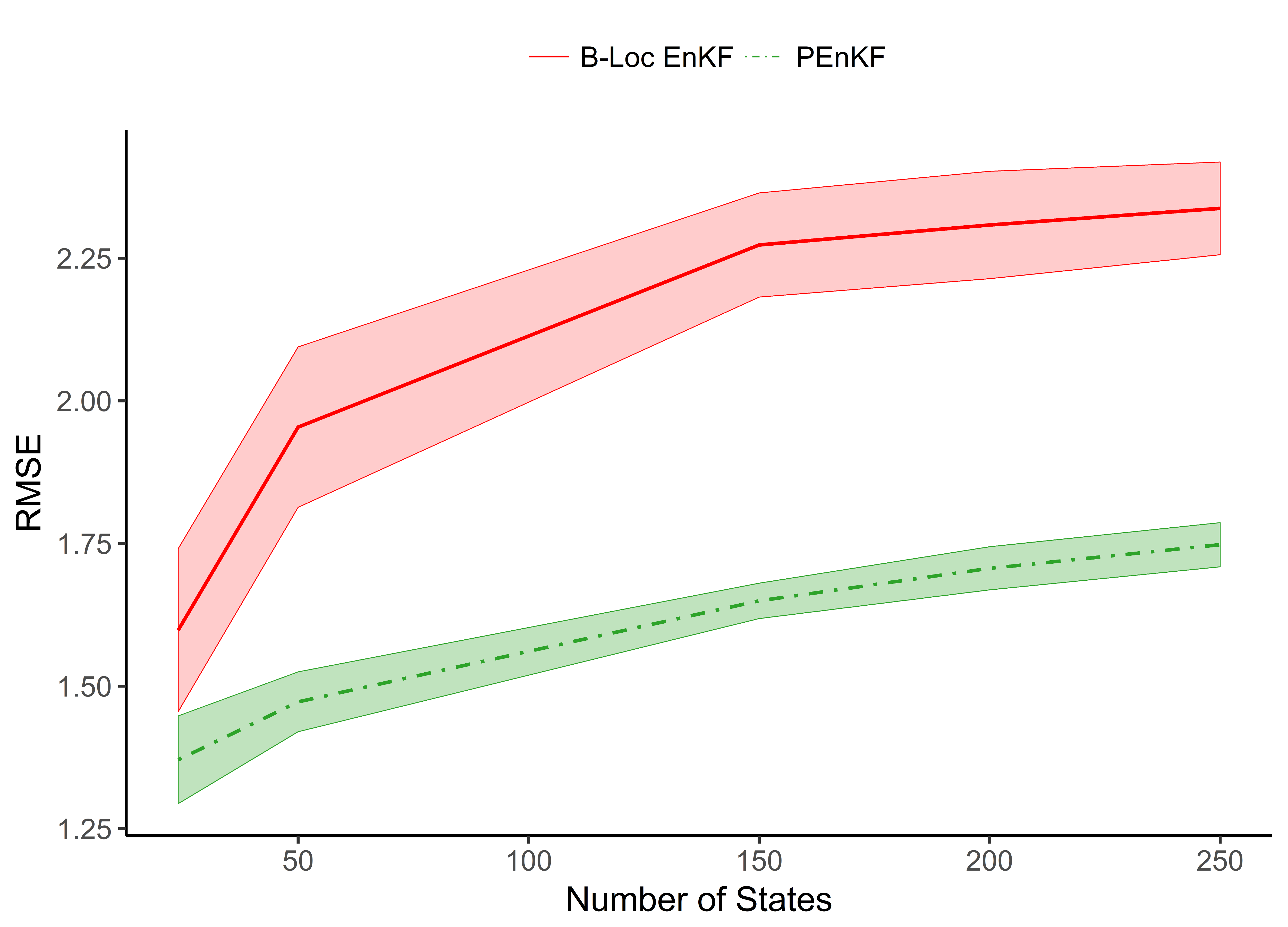}\\
	\caption{The RMSE of the B-Loc EnKF and PEnKF over 50 trials using 25 ensemble members. The darker lines of each linetype are the mean and the colored areas are the 95\% confidence intervals. There is clear separation between the RMSE of the two filters with the PEnKF's error being significantly smaller.} \label{Lorenz_RMSE}
\end{figure}

\subsection{Modified Shallow Water Equations System}

While the Lorenz 96 system shows that the PEnKF has strong performance because it is successful at reducing the sampling errors and is capable of learning the interactions between state variables, the system is not very realistic in that all state variables are identical and the relationships between state variables are very simplistic. We now consider a system based on the modified shallow water equations of \cite{wursch2014}, which models cloud convection with fluid dynamics equations, but is substantially less computationally expensive than full scale numerical weather prediction models. The system has three types of state variables: fluid height, rain content, and horizontal wind speed, which are shown in \ref{s1}\ref{s2}\ref{s3} of the appendix.

To generate observations from this system we use the R package ``modifiedSWEQ" created by the authors of \cite{Rpackage}, and use the same simulation settings as in \cite{Robert}. So we always observe the rain content, but wind speed is only observed at locations where it is raining and fluid height is never observed. Explicitly for the R function $generate.xy()$, we use $h_c = 90.02, h_r = 90.4 $ for the cloud and rainwater thresholds, a 0.005 rain threshold, $ \sigma_r = 0.1, \sigma_u = 0.0025 $ to be the standard deviation of the observation noise for rain and wind respectively, and $ \mat{R} = \text{diag}([R^2_r=0.025^2 \,\, R^2_u=\sigma_u^2] )$ to be the estimated diagonal noise covariance matrix. All other parameters are just the default ones in the function. The initial ensemble is drawn from a free forecast run with 10000/60 time-steps between each ensemble member. We give a snapshot of the system at a random time point in Fig. \ref{SWEQsystem}. There are $ p = 300 $ state variables for each type, making the state space have 900 dimensions and we assimilate the system every 5 seconds for a total time period of 6 hours. Like in \cite{Robert}, we choose to use only 50 ensemble members and we do not perturb rain observations that are 0, because at these points there is no measurement noise.

\begin{figure}[!h]
	\centering
		\noindent\includegraphics[width=3.25in, height=3.5in]{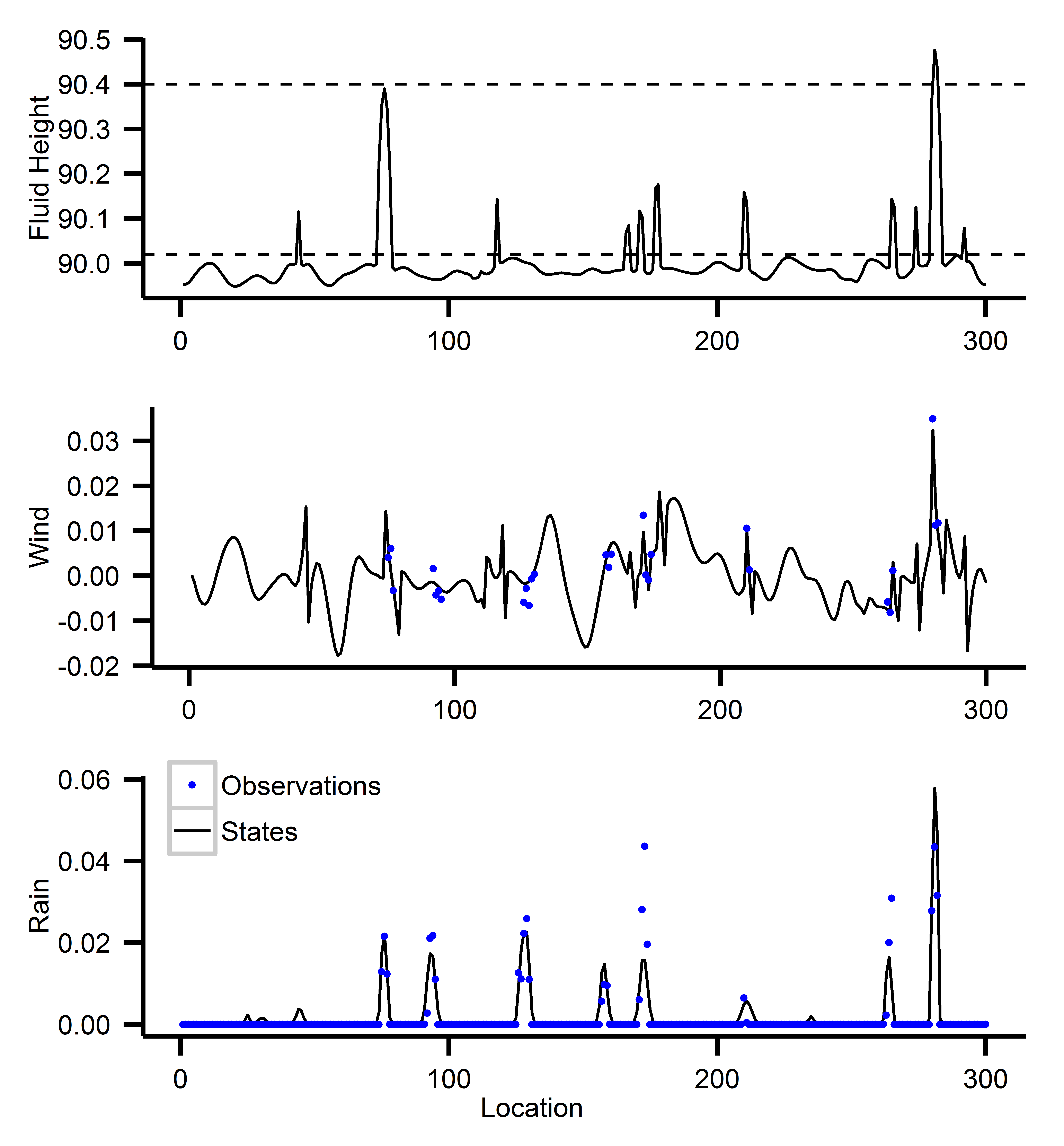}\\
	\caption{Fluid height, rain, and wind at 300 different locations at an instance of time. The blue dots are observations; rain is always observed, wind is only observed when the rain is non-zero, fluid height is never observed. The dashed lines in fluid height are the cloud and rainwater thresholds.}\label{SWEQsystem}
\end{figure} 

The B-Loc EnKF uses a $ 3p \times 3p $ taper matrix with $c = 5$, however the entries off the $ p \times p $ block diagonals are depressed (they are multiplied by 0.9). The NAIVE-LEnKPF uses the same settings as in \cite{Robert}, so a localization parameter of 5km, which gives the same taper matrix as the one used in the B-Loc EnKF, and an adaptive $\gamma$ parameter. For the PEnKF, we set the penalty parameter to be a $ 3p \times 3p $ matrix, $ \bm{\Lambda} = c_\lambda \sqrt{ \bm{\lambda}_R \bm{\lambda}_R^T \log(3p) / n } $, where the first $p$ entries of the vector $ \bm{\lambda}_R $ are reference units and the rest are to scale for the perturbation noise of the different state types. So the first $p$ are 1 (reference) for fluid height, the second $p$ are $ R_u $ for wind, and the last $p$ are $ R_r $ for rain. We choose the constant $ c_\lambda $ with eBIC like before and search in the range $ [.005, 1] $.

Fig. \ref{SWEQ_RMSE} shows the mean and approximate 95\% confidence intervals of the RMSE for fluid height, wind speed, and rain content over 6 hours of time using 50 trials. The mean RMSE for all three filters are well within each others' confidence intervals for the fluid height and wind variables. For the rain variables, the mean RMSE of neither the B-Loc EnKF nor the NAIVE-LEnKPF are in the PEnKF's confidence intervals and the mean RMSE of the PEnKF is on the boundary of the other two models' confidence intervals. This strongly suggests that the PEnKF's rain error is statistically smaller than the rain errors of the other two filters. Since this simulation is not as simple as the previous ones, the interactions between the state variables are most likely not as effectively captured by the taper matrix or other localization methods, and the results from this simulation suggest that the PEnKF is learning more accurate interactions for the rain variables. We do not show the results of the BLOCK-LEnKPF of \cite{Robert} because the algorithm suffered from filter divergence in 27 of the 50 trials, and in the trials where it did not fail, it performed very similar to the NAIVE-LEnKPF.

\begin{figure}[!h]
	\centering
		\noindent\includegraphics[width=3.25in, height=4.75in]{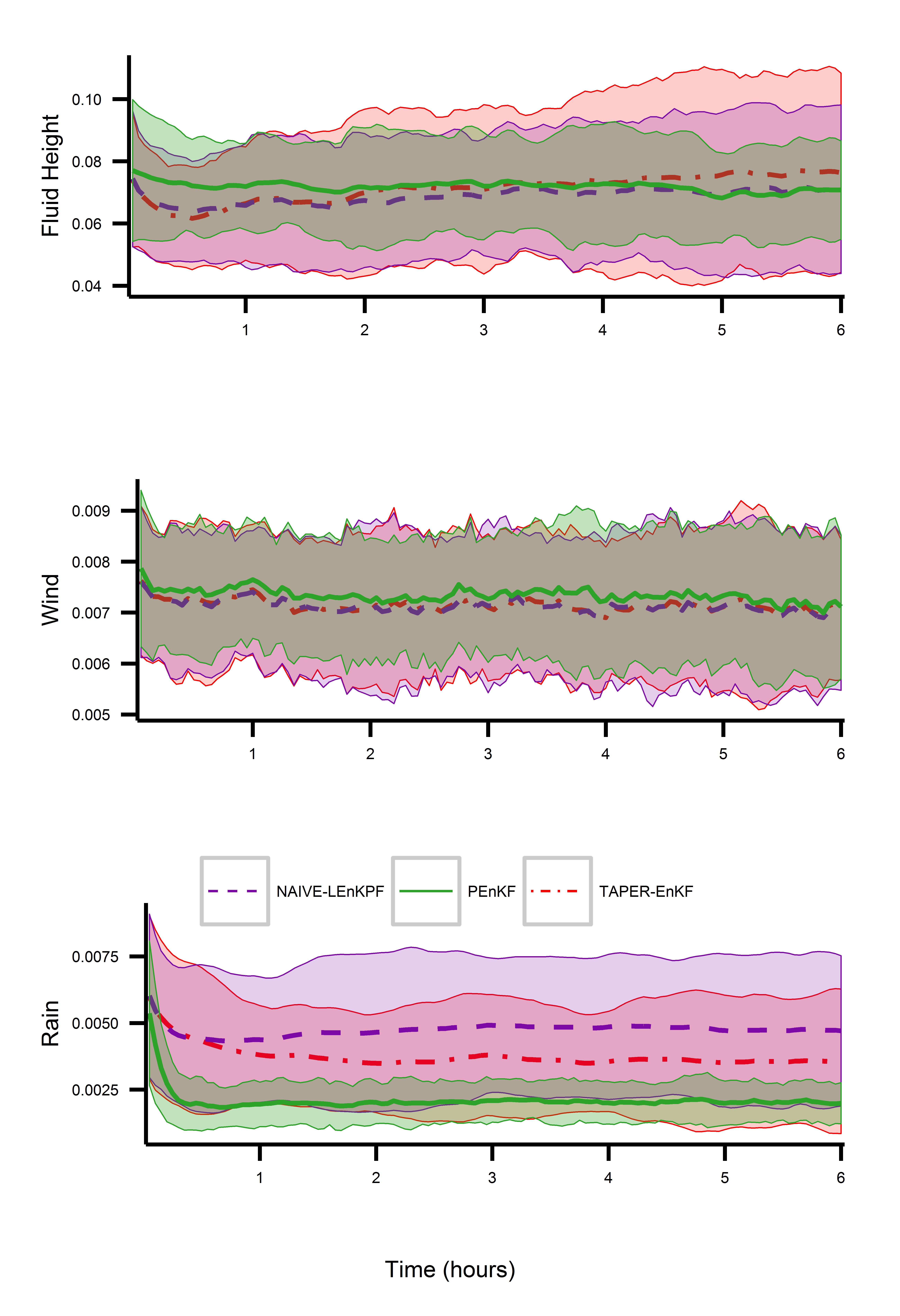}\\
	\caption{The RMSE of the B-Loc EnKF, NAIVE-LEnKPF, and PEnKF over 50 trials. The darker lines of each linetype are the mean and the colored areas are the 95\% confidence intervals. All three filters are pretty indistinguishable except for the PEnKF's rain error, which is statistically smaller than the others.} \label{SWEQ_RMSE}
\end{figure}

\section{Discussion}

We propose a new algorithm based on an unstructured ensemble Kalman filter that is designed for superior performance in non-linear high dimensional systems when dependency structure in the state vector is unknown. This algorithm we call the penalized ensemble Kalman filter because it uses the popular statistical concept of penalization/regularization in order to make the problem of estimating the forecast covariance matrix well-defined (strictly convex). This in turn both decreases the sampling errors (variance) in the forecast covariance estimator by trading it off for bias and prevents filter divergence by ensuring that the estimator is positive definite. The PEnKF is computationally efficient in that it is not significantly slower than the standard EnKF algorithms and easy to implement since it only adds one additional step. This step uses the well-established GLASSO algorithm, available in almost any scientific computing language, for sparse estimation of inverse covariance matrices. We give theoretical results that prove that the Kalman gain matrix constructed from this estimator will converge to the population Kalman gain matrix under the non-simplistic asymptotic case of high-dimensional scaling, where the sample size and the dimensionality increase to infinity. 

Through simulations, we show that the PEnKF can do at least as well as, and sometimes better than, localized filters that use much more prior information. We emphasize that by doing just as well as the B-Loc EnKF, which has a close to optimal taper matrix, the PEnKF is effectively correctly learning the structure of interactions between the state variables. Thus the PEnKF is able to infer from the ensemble the true conditional correlation between the states as opposed to having it or the correlation pre-defined. In a non-simulation setting where there is possibly more uncertainty in the prior knowledge of the interactions between state variables, correct localization is much more difficult, making any localized filter's performance likely sub-optimal. In contrast, since the PEnKF does not use any explicit prior knowledge, its performance will not differ in this way between simulations and real-life situations. The more complicated simulation, based on the modified shallow water equations, highlights this advantage of the PEnKF through its substantial superior performance in estimating the hidden states of the rain variables. This is because the relationship between states from different types (e.g. rain and fluid height) is far less obvious than the relationship between states of the same type, which is heavily influenced by physical location. Another feature of the approach is that it seems to require less variance inflation. None was applied to any algorithm in our comparison, but the PEnKF approach never collapsed. The penalization of the inverse covariance actually produces a slight inflation on the diagonal of the covariance, which seems to help in this regard. 

We propose a simple, but effective BIC penalty method to search for a penalty parameter for the PEnKF that performs well in the simulations. However, it is well known that such simple methods are biased and other more sophisticated methods can result in penalties that are closer to optimal. Since the PEnKF can be sensitive to the penalty parameter, we think that it will be worthwhile to investigate improved penalty parameter selection methods for the PEnKF. We also think that such an improvement may be very challenging due to the more complex mapping imposed by the $\ell_1$ penalty that we use to enforce sparsity. This topic is a very active area of research in theoretical statistics, which may enable advances enabling improved PEnKF methods. Additionally, there have been interesting new developments in the mathematical behavior of the modes of a dynamical system; specifically in \cite{doi:10.1137/16M1068712} where they show that the number of ensemble members $n$ needs only to be of the size of the growing and neutral modes in the dynamical system because the decaying modes do not hamper forecasting. This role of the modes of a dynamical system in our more general non-linear dynamical system model would be a promising avenue for future work.

\section*{Appendix}	

\begin{definition*} \leavevmode
	\begin{enumerate}[label= (D\arabic*)]
		\item $\sigma_{\min}(\mat{A})$ and $\sigma_{\max}(\mat{A})$ denote the minimum and maximum singular values of any matrix $\mat{A}$.
		\item The spectral $||\cdot||_2$ and Frobenius $||\cdot||_F$ norms are submultiplicative $\lVert \mat{A}\mat{B} \rVert \leq \lVert \mat{A} \rVert \lVert \mat{B} \rVert $ and unitary invariant $\lVert\mat{A} \mat{U} \rVert = \lVert \mat{U}^T \mat{A} \rVert = \lVert \mat{A}^T \rVert $ where $\mat{U} \mat{U}^T = \mat{I}$. So $ \lVert \mat{A} \mat{B}\rVert_F = \lVert \mat{A} \mat{U}\mat{D}\mat{V}^T||_F = \lVert\mat{A} \mat{U} \mat{D}\rVert_F \leq \lVert \mat{A} \mat{U} \sigma_{\max}(\mat{D})\rVert_F = \lVert\mat{A} \mat{U} \rVert_F \lVert \mat{D}\rVert_2 = \lVert \mat{A}||_F \lVert\mat{B}\rVert_2 $
		\item $||\mat{A}^{-1}||_2 = \sigma_{\max}(\mat{A}^{-1}) = 1/\sigma_{\min}(\mat{A}) $
		\item $\mat{K}$ and $\tilde{\mat{K}}$ can be decomposed like
		\begin{flalign}
		& \mat{K} = \mat{P}^f \mat{H}^T ( \mat{H} \mat{P}^f \mat{H}^T +\mat{R} )^{-1} \\
		& \hspace{8pt} = \mat{P}^f \mat{H}^T \left( \mat{R}^{-1} - \mat{R}^{-1} \mat{H} ((\mat{P}^f)^{-1} + \mat{H}^T \mat{R}^{-1} \mat{H} )^{-1} \mat{H}^T \mat{R}^{-1} \right) \notag \\
		& \hspace{8pt} = \left( \mat{P}^f - \mat{P}^f \left((\mat{P}^f)^{-1} (\mat{H}^T \mat{R}^{-1} \mat{H})^{-1} + \mat{I} \right)^{-1} \right) \mat{H}^T \mat{R}^{-1} \notag \\
		& \hspace{8pt} = \left( \mat{P}^f - \mat{P}^f \left((\mat{H}^T \mat{R}^{-1} \mat{H})^{-1} + \mat{P}^f \right)^{-1} \mat{P}^f \right) \mat{H}^T \mat{R}^{-1} \notag \\
		& \hspace{8pt} = \left((\mat{P}^f)^{-1} + \mat{H}^T \mat{R}^{-1} \mat{H}\right)^{-1} \mat{H}^T \mat{R}^{-1}. \notag
		\end{flalign}
		\item $\mathcal{E}$ is the edge set corresponding to non-zeros in $ ( \mat{P}^f )^{-1} $ and $\mathcal{E}^c$ is its complement. $ \Gamma $ is the Hessian of Eq. \eqref{objective}.
		\item $\partial \lVert \bm{\Theta}\rVert _1 $ can be any number between -1 and 1 where $\bm{\Theta}$ is 0 because the derivative of an absolute value is undefined at zero. Thus, it is the set of all matrices $ \mat{Z} \in \mathbb{S}^{p \times p} $ such that
		\begin{flalign} \label{subdiff}
		& Z_{ij} = 
		\begin{cases}
		& \text{sign} (\Theta_{ij}) \quad\quad \text{if} \quad \Theta_{ij} \neq 0 \\
		& \in [-1, 1] \, \quad\quad \text{if} \quad \Theta_{ij} = 0 \, .
		\end{cases}
		\end{flalign}
		\item A bounded $ 4m^\mathrm{th} $ moment is the highest fourth-order moment of a random variable that is finite, where $m$ is the number of fourth-order moments.
	\end{enumerate}
\end{definition*}

\begin{lemma} \label{constantH}
	Because $\mat{H}$ is a constant matrix, it does not affect the asymptotic magnitude of the modified or sample Kalman gain matricies under any norm.
\end{lemma}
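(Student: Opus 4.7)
\bigskip

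\noindent\textbf{Proof proposal for Lemma \ref{constantH}.} The plan is to pass through the Woodbury-style representation of the Kalman gain given in decomposition (D4), so that the dependence on $\mat{H}$ and $\mat{R}$ is isolated into outer factors that have spectral norm of order $O(1)$, while the asymptotically shrinking contribution comes entirely from the difference between the estimated and true forecast covariance. Concretely, let $\mat{G} = \mat{H}^T\mat{R}^{-1}\mat{H}$ and write, using (D4),
\begin{flalign*}
\mat{K} &= \left((\mat{P}^f)^{-1}+\mat{G}\right)^{-1}\mat{H}^T\mat{R}^{-1}, \\
\tilde{\mat{K}} &= \left((\tilde{\mat{P}}^f)^{-1}+\mat{G}\right)^{-1}\mat{H}^T\mat{R}^{-1},
\end{flalign*}
and apply the elementary identity $\mat{A}^{-1}-\mat{B}^{-1}=\mat{A}^{-1}(\mat{B}-\mat{A})\mat{B}^{-1}$ to obtain
\begin{flalign*}
\tilde{\mat{K}}-\mat{K} = \left((\tilde{\mat{P}}^f)^{-1}+\mat{G}\right)^{-1}\!\left((\mat{P}^f)^{-1}-(\tilde{\mat{P}}^f)^{-1}\right)\!\left((\mat{P}^f)^{-1}+\mat{G}\right)^{-1}\!\mat{H}^T\mat{R}^{-1}.
\end{flalign*}
The same manipulation, with $\hat{\mat{P}}^f$ in place of $\tilde{\mat{P}}^f$, handles the sample Kalman gain.

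Next, I would take any submultiplicative, unitarily invariant norm (in particular the Frobenius or spectral norm of (D2)) and factor the bound so that the inverse-covariance perturbation sits in the middle in the chosen norm and everything else sits in spectral norm. This yields
\begin{flalign*}
\lVert\tilde{\mat{K}}-\mat{K}\rVert \;\leq\; \bigl\lVert((\tilde{\mat{P}}^f)^{-1}+\mat{G})^{-1}\bigr\rVert_2\,\bigl\lVert(\tilde{\mat{P}}^f)^{-1}-(\mat{P}^f)^{-1}\bigr\rVert\,\bigl\lVert((\mat{P}^f)^{-1}+\mat{G})^{-1}\bigr\rVert_2\,\lVert\mat{H}^T\mat{R}^{-1}\rVert_2.
\end{flalign*}
The two ``sandwich'' factors are controlled by noting that $\mat{G}\succeq \mat{0}$, so adding $\mat{G}$ can only raise the smallest eigenvalue; by (D3) this gives $\lVert((\mat{P}^f)^{-1}+\mat{G})^{-1}\rVert_2\leq \lVert\mat{P}^f\rVert_2$ and analogously for $\tilde{\mat{P}}^f$. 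The bounded eigenvalue regularity assumption on $\mat{P}^f$ then makes both of these $O(1)$, and the high-probability bound on $\lVert\tilde{\mat{P}}^f\rVert_2$ inherited from \cite{ravikumar2011high} does the same for the tilde version. Finally $\lVert\mat{H}^T\mat{R}^{-1}\rVert_2\leq\lVert\mat{H}\rVert_2\,\lVert\mat{R}^{-1}\rVert_2$, both of which are constants: $\mat{R}$ is a diagonal observation-noise covariance with entries bounded away from $0$ and $\infty$, and $\mat{H}$ is a fixed observation operator whose entries and sparsity pattern do not scale with $n$. Hence all $\mat{H}$- and $\mat{R}$-dependent factors collapse into a single multiplicative constant, leaving
\begin{flalign*}
\lVert\tilde{\mat{K}}-\mat{K}\rVert \;\lesssim\; \bigl\lVert(\tilde{\mat{P}}^f)^{-1}-(\mat{P}^f)^{-1}\bigr\rVert,
\end{flalign*}
with the analogous statement for $\hat{\mat{K}}-\mat{K}$, which is exactly the assertion of the lemma.

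The main obstacle, and the point that deserves the most care, is the claim that $\lVert\mat{H}\rVert_2=O(1)$ uniformly in $p$. For the selection/interpolation operators used throughout the paper (e.g.\ the subsampling $\mat{H}$ in the Lorenz 96 experiment, or banded interpolation matrices more generally) this is immediate because the rows are unit-norm and almost-orthogonal, but a clean hypothesis that rules out pathological observation operators (such as a uniformly bounded maximum row and column sum, equivalently a bounded $\ell_1$--$\ell_\infty$ induced norm) should be stated explicitly. Once that is in hand the rest of the argument is purely algebraic and requires no probabilistic input beyond what Theorem \ref{SSE_K} already assumes.
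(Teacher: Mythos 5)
Your argument proves a different statement from the one the lemma actually asserts and from the way the paper uses it. Lemma \ref{constantH} is the two-sided norm equivalence $\lVert\mat{A}\rVert \asymp \lVert\mat{H}\mat{A}\rVert$ for a fixed observation operator $\mat{H}$: in the proof of Theorem \ref{SSE_K} the sample-gain error is first bounded in the form $\lVert\mat{H}\hat{\mat{K}}-\mat{H}\mat{K}\rVert_F^2$ (via the identity $\mat{H}\hat{\mat{K}}=\mat{I}-\mat{R}(\mat{H}\hat{\mat{P}}^f\mat{H}^T+\mat{R})^{-1}$, which is the actual content of the paper's proof of the lemma), and the lemma is then invoked to strip the leading $\mat{H}$ and conclude the same rate for $\lVert\hat{\mat{K}}-\mat{K}\rVert_F^2$. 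That step needs the lower bound $\lVert\mat{H}\mat{A}\rVert\gtrsim\lVert\mat{A}\rVert$, which you never address (and which, as stated, requires $\mat{H}$ to be injective --- the $20\times 40$ selection matrix of the Lorenz example is not, so this is a genuine delicacy of the paper's own argument, not something that can be skipped). What you do prove --- a direct perturbation bound $\lVert\tilde{\mat{K}}-\mat{K}\rVert\lesssim\lVert(\tilde{\mat{P}}^f)^{-1}-(\mat{P}^f)^{-1}\rVert$ via the representation (D4) and a resolvent identity --- is essentially the second display block of the paper's proof of Theorem \ref{SSE_K}, done somewhat more carefully: your spectral-norm control of the sandwich factors using $\mat{G}\succeq\mat{0}$ and (D3) is cleaner than the paper's bare assertion that $\lVert((\mat{P}^f)^{-1}+\mat{H}^T\mat{R}^{-1}\mat{H})^{-1}\rVert_F^2$ is ``a constant.''

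The concrete gap is your closing claim that ``the same manipulation, with $\hat{\mat{P}}^f$ in place of $\tilde{\mat{P}}^f$, handles the sample Kalman gain.'' It does not. In the regime the paper cares about, $n<p$, the sample forecast covariance $\hat{\mat{P}}^f$ is singular, so the representation $\hat{\mat{K}}=((\hat{\mat{P}}^f)^{-1}+\mat{G})^{-1}\mat{H}^T\mat{R}^{-1}$ is unavailable; and even where a (pseudo)inverse exists, your bound would be controlled by $\lVert(\hat{\mat{P}}^f)^{-1}-(\mat{P}^f)^{-1}\rVert$, a quantity for which no useful concentration rate is available --- the rates in \eqref{SSE_SCM} are for $\lVert\hat{\mat{P}}^f-\mat{P}^f\rVert_F^2$, and the inverse sample covariance does not concentrate in high dimensions. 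This asymmetry is exactly why the paper routes the sample-gain error through $\mat{H}\hat{\mat{K}}$ (which involves $\hat{\mat{P}}^f$ but never its inverse) and then needs Lemma \ref{constantH} to remove the $\mat{H}$; your route avoids the lemma for $\tilde{\mat{K}}$ but cannot replace it for $\hat{\mat{K}}$. Your side remark that a uniform bound on $\lVert\mat{H}\rVert_2$ should be an explicit hypothesis is well taken, but the hypothesis the lemma really needs is an injectivity (full column rank, or a restricted lower bound) condition on $\mat{H}$, not merely an upper bound on its spectral norm.
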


\begin{proof}[Proof of Lemma \ref{constantH}]
	$ ||\tilde{\mat{K}}|| \asymp ||\mat{H}|| \,\, ||\tilde{\mat{K}}|| \asymp ||\mat{H} \tilde{\mat{K}}||$ under any norm where $\mat{H} \tilde{\mat{K}}$
	\begin{flalign}
	& = \mat{H} \tilde{\mat{P}}^f \mat{H}^T ( \mat{H} \tilde{\mat{P}}^f \mat{H}^T +\mat{R} )^{-1} = \left( \mat{I} + \mat{R} (\mat{H} \tilde{\mat{P}}^f \mat{H}^T)^{-1} \right)^{-1} \\
	& = \mat{R} \mat{R}^{-1} \left( \mat{R}^{-1} + (\mat{H} \tilde{\mat{P}}^f \mat{H}^T)^{-1} \right)^{-1} \mat{R}^{-1} \notag \\
	& = \mat{I} - \mat{R} (\mat{H} \tilde{\mat{P}}^f \mat{H}^T +\mat{R} )^{-1} \notag
	\end{flalign}
	The same argument holds for $\hat{\mat{K}}$, where $(\mat{H} \hat{\mat{P}}^f \mat{H}^T)^{-1}$ is the pseudoinverse if the inverse does not exist.
\end{proof}

\begin{assumptions*} \leavevmode
	The following assumptions are necessary for the minimizer of Eq. \eqref{objective} to have good theoretical properties, \cite{ravikumar2011high}. Thus we assume they are true for the theorem.
	\begin{enumerate}[label= (A\arabic*)]
		\item \label{a1} There exists some $ \alpha \in (0, 1] $ such that $ \underset{e \in \mathcal{E}^c}{\max} \, || \Gamma_{e\mathcal{E}} (\Gamma_{\mathcal{E}\mathcal{E}})^{-1} ||_1\le (1-\alpha) $.
		\item \label{a2} The ratio between the maximum and minimum eigenvalues of $\mat{P}^f$ is bounded.
		\item \label{a3} The maximum $ \ell_1 $ norms of the rows of $ \mat{P}^f $ and $(\Gamma_{\mathcal{E}\mathcal{E}})^{-1} $ are bounded.
		\item \label{a4} The minimum non-zero value of $ ( \mat{P}^f )^{-1} $ is $ \Omega( \sqrt{\log(p) / n} ) $ for a sub-Gaussian state vector and $ \Omega(\sqrt{p^{3/m}/n}) $ for state vectors with bounded $ 4m^\mathrm{th} $ moments.
	\end{enumerate}
	Our assumptions are stronger than necessary, and it is common to allow the error rates to depend on the bounding constants above, but for simplicity we give the error rates only as a function of the dimensionality $n, p$ and sparsity $s, d$ parameters.
\end{assumptions*}

\begin{proof}[Proof of Theorem \ref{SSE_K}]
	
	From \cite{vershynin2012close} and \cite{ravikumar2011high}, we know that for sub-Gaussian random variables and those with bounded 4$m^\mathrm{th}$ moments respectively, the SSE of the sample covariance matrix are
	\begin{equation} \label{SSE_SCM}
	\begin{cases} 
	& O \left( p^2 / n \right) \\
	& O \left( \left(\log_2\log_2(p) \right)^4 p (p/n)^{1-1/m} \right) 
	\end{cases}
	\end{equation}
	and with high probability and the SSE of $(\tilde{\mat{P}}^f)^{-1}$ are
	\begin{equation} \label{SSE_glasso}
	\begin{cases}
	& O \left( 3 (s + p) \log(p) / n \right) \text{ for } \quad \lambda \asymp \sqrt{3 \log(p) / n} \\
	& O \left( (s + p) p^{3/m} / n \right) \text{ for } \quad \lambda \asymp \sqrt{p^{3/m}/n}
	\end{cases}
	\end{equation}
	with probability 1 - 1/p.
	
	\begin{flalign}
	& \lVert \mat{H} \hat{\mat{K}} - \mat{H} \mat{K}\rVert_F^2 = \lVert \mat{R} (\mat{H} \mat{P}^f \mat{H}^T +\mat{R} )^{-1} - \mat{R} (\mat{H} \hat{\mat{P}}^f \mat{H}^T +\mat{R} )^{-1} \rVert_F^2 \\
	& = \lVert \mat{R} \left( (\mat{H} \hat{\mat{P}}^f \mat{H}^T +\mat{R} )^{-1} \left( (\mat{H} \hat{\mat{P}}^f \mat{H}^T +\mat{R} )- (\mat{H} \mat{P}^f \mat{H}^T +\mat{R} ) \right) (\mat{H} \mat{P}^f \mat{H}^T +\mat{R} )^{-1} \right) \rVert_F^2 \notag \\
	& = \lVert \mat{R} (\mat{H} \hat{\mat{P}}^f \mat{H}^T +\mat{R} )^{-1} \mat{H} (\hat{\mat{P}}^f - \mat{P}^f) \mat{H}^T (\mat{H} \mat{P}^f \mat{H}^T +\mat{R} )^{-1} \rVert_F^2 \notag \\
	& \leq \lVert \mat{R} (\mat{H} \hat{\mat{P}}^f \mat{H}^T +\mat{R} )^{-1} \mat{H} \rVert_2^2 \lVert (\hat{\mat{P}}^f - \mat{P}^f) \rVert_F^2 \lVert \mat{H}^T (\mat{H} \mat{P}^f \mat{H}^T +\mat{R} )^{-1} \rVert_F^2 \notag
	\end{flalign}
	So, the second factor has the rates in Eq. \eqref{SSE_SCM} and the final factor is a constant. The first factor is also a constant because
	\begin{flalign}
	& \lVert \mat{R} (\mat{H} \hat{\mat{P}}^f \mat{H}^T +\mat{R} )^{-1} \mat{H} \rVert_2^2 \leq \lVert (\mat{H} \hat{\mat{P}}^f \mat{H}^T \mat{R}^{-1} + \mat{I})^{-1} \rVert_2^2 \lVert\mat{H}\rVert_2^2 \\
	& = \lVert \mat{H}\rVert_2^2 / (\sigma_{\min}(\mat{H} \hat{\mat{P}}^f \mat{H}^T \mat{R}^{-1} + \mat{I}))^2 \leq \lVert\mat{H}\rVert_2^2 . \notag
	\end{flalign}
	Thus $\lVert\mat{H} \hat{\mat{K}} - \mat{H} \mat{K}\rVert_F^2$ also has the rates in Eq. \eqref{SSE_SCM} and from Lemma \ref{constantH}, $ \lVert\hat{\mat{K}} - \mat{K}\rVert_F^2 $ does too.

	\begin{flalign}
	& \lVert\tilde{\mat{K}} - \mat{K}\rVert_F^2 \notag \\
	& = \lVert\left( ((\tilde{\mat{P}}^f)^{-1} + \mat{H}^T \mat{R}^{-1} \mat{H} )^{-1} - ((\mat{P}^f)^{-1} + \mat{H}^T \mat{R}^{-1} \mat{H})^{-1} \right) \mat{H}^T \mat{R}^{-1}\rVert_F^2 \\
	& = \lVert\left( ((\mat{P}^f)^{-1} + \mat{H}^T \mat{R}^{-1} \mat{H})^{-1} \Big( ((\mat{P}^f)^{-1} + \mat{H}^T \mat{R}^{-1} \mat{H}) \right. \notag \\
	& \hspace{12pt} \left. - ((\tilde{\mat{P}}^f)^{-1} + \mat{H}^T \mat{R}^{-1} \mat{H}) \Big) ((\tilde{\mat{P}}^f)^{-1} + \mat{H}^T \mat{R}^{-1} \mat{H})^{-1} \right) \mat{H}^T \mat{R}^{-1} \rVert_F^2 \notag \\
	& \leq \lVert ((\mat{P}^f)^{-1} + \mat{H}^T \mat{R}^{-1} \mat{H})^{-1} \rVert_F^2 \lVert (\mat{P}^f)^{-1} - (\tilde{\mat{P}}^f)^{-1} \rVert_F^2 \lVert \tilde{\mat{K}} \rVert_2^2 \notag
	\end{flalign}
	The first term is a constant and the second term has the rates in Eq. \eqref{SSE_glasso}. The final term is also a constant because $ || \tilde{\mat{K}} ||_2^2 \asymp || \mat{H} \tilde{\mat{K}} ||_2^2 = 1/ \sigma_{\min}( \mat{I} + \mat{R} (\mat{H} \tilde{\mat{P}}^f \mat{H}^T)^{-1}) \leq 1$ . Thus $ \lVert\tilde{\mat{K}} - \mat{K}\rVert_F^2 $ also has the rates in Eq. \eqref{SSE_glasso}.
\end{proof}

\begin{swe} \label{swe} \leavevmode
The following equations are the modified shallow water equations described in \cite{wursch2014}.
	\begin{enumerate}[label= (S\arabic*)] 
		\item \label{s1} $\frac{\partial u}{\partial t} + u \frac{\partial u}{\partial x} + \frac{\partial (\phi + g H_0 r)}{\partial x} = K \frac{\partial^2 u}{\partial x^2} + \bar{u} \frac{\partial}{\partial x} (\exp\{-(x-x_n)^2/l^2\}) $
		\item \label{s2} $\frac{\partial h}{\partial t} + \frac{\partial (u h)}{\partial x} = K \frac{\partial^2 h}{\partial x^2} $
		\item \label{s3} $\frac{\partial r}{\partial t} + u \frac{\partial r}{\partial x} = K_r \frac{\partial^2 r}{\partial x^2} - \alpha r - 
		\begin{cases}
		\beta \frac{\partial u }{\partial x}, \qquad Z>H_r \text{ and } \frac{\partial u }{\partial x} < 0 \\
		0, \qquad \text{otherwise} 
		\end{cases} 
$
	\end{enumerate}
which are centered around location $x_n$ with length scale $l$, and where $u$ is the fluid velocity and $\bar{u}$ is its amplitude, $h$ is the fluid depth, and $r$ is the mass fraction of rain water. The geopotential $\phi$ is based on the height of the fluid surface and defined as $\phi = \begin{cases}
\phi_c + g H, \qquad Z>H_r \text{ and } \frac{\partial u }{\partial x} < 0 \\
g(H+h), \qquad \text{otherwise} 
\end{cases} $ where $H$ is the height of the topography with thresholds $H_c$ and $H_r$ for convection and rainwater respectively, and $H_0$ is the initial absolute fluid layer height. $\alpha$ and $\beta$ are physical parameters that can be varied to produce realistic space and time scales for the cloud models.
\end{swe}


\end{document}